\documentclass[10pt,journal]{IEEEtran}
\IEEEoverridecommandlockouts

\usepackage{cite}
\usepackage{amsmath}
\usepackage{amsthm}
\usepackage{amssymb}

\usepackage{graphicx}
\usepackage{textcomp}
\usepackage{xcolor}

\usepackage{algorithm}
\usepackage[noend]{algpseudocode}
\makeatletter
\def\BState{\State\hskip-\ALG@thistlm}
\makeatother
%
%---------------------

\newtheorem{proposition}{Proposition}
\newtheorem{definition}{Definition}
\newtheorem{remark}{Remark}
\newtheorem{assumption}{Assumption}

\usepackage{matlab-prettifier}
\usepackage{color}
\usepackage{mathrsfs}
% enumerate item 
\usepackage[shortlabels]{enumitem}
\makeatletter
\def\BState{\State\hskip-\ALG@thistlm}
\makeatother
\def\BibTeX{{\rm B\kern-.05em{\sc i\kern-.025em b}\kern-.08em
    T\kern-.1667em\lower.7ex\hbox{E}\kern-.125emX}}
\usepackage{subcaption}

\begin{document}

\title{{\bf\Large A Game-Theoretic Analysis of Auditing Differentially Private Algorithms with Epistemically Disparate Herd}
}

\author{Ya-Ting Yang, Tao Zhang, and Quanyan Zhu
\thanks{The Authors are with the Department of Electrical and Computer Engineering, New York University, Brooklyn, NY, 11201, USA; E-mail: {\tt\small \{yy4348, tz636, qz494\}@nyu.edu}.}% 
}

\maketitle

\begin{abstract}
Privacy-preserving AI algorithms are widely adopted in various domains, but the lack of transparency might pose accountability issues. While auditing algorithms can address this issue, machine-based audit approaches are often costly and time-consuming. Herd audit, on the other hand, offers an alternative solution by harnessing collective intelligence. Nevertheless, the presence of epistemic disparity among auditors, resulting in varying levels of expertise and access to knowledge, may impact audit performance. An effective herd audit will establish a credible accountability threat for algorithm developers, incentivizing them to uphold their claims. In this study, our objective is to develop a systematic framework that examines the impact of herd audits on algorithm developers using the Stackelberg game approach. The optimal strategy for auditors emphasizes the importance of easy access to relevant information, as it increases the auditors' confidence in the audit process. Similarly, the optimal choice for developers suggests that herd audit is viable when auditors face lower costs in acquiring knowledge. By enhancing transparency and accountability, herd audit contributes to the responsible development of privacy-preserving algorithms.
\end{abstract}
\section{Introduction}
AI and algorithmic decision-making have become pervasive in both business and society. However, when algorithms are treated as ``black boxes'' and their inner workings remain undisclosed, it becomes difficult to ensure that they perform as intended and adhere to necessary standards \cite{Why_audit}. One specific category of algorithms that exemplifies this challenge is privacy-preserving algorithms \cite{DP_survey}. For instance, platforms like Facebook Ad Recommendation Systems, Google SQL, and Safari have integrated differential privacy into their products to provide privacy protection. Nevertheless, verifying such claims can be arduous and intricate, for example, see \cite{VDP_detection,dp_finder,num_dp}.

%The adoption of privacy-preserving algorithms is a positive step toward addressing privacy concerns in AI applications. However, the lack of transparency and the inability to scrutinize these algorithms can cast doubt on the effectiveness of the privacy measures. Greater efforts need to be made to establish independent auditing mechanisms or standardized practices that allow for the verification and validation of privacy claims made by algorithmic systems. By promoting transparency and accountability, we ensure that privacy-preserving algorithms genuinely deliver on their promises and provide the necessary protection for users' privacy.

\noindent\textbf{Herd Audit:} Auditing algorithms \cite{audit_survey}, \cite{audit_transparent} play a crucial role in tackling this challenge. However, traditional machine-based audit methods like direct scraping, sock puppet, and carrier puppet often necessitate the development of custom computer programs to gather data. Not only can these approaches be expensive, but they also consume a significant amount of time.
A cost-effective alternative approach to auditing involves leveraging citizen science and crowd-sourcing principles to establish a democratic audit process that engages a diverse population of end users \cite{collective_intelligence}. This concept gives rise to \textit{herd-audit} (or group-audit) approaches.
%
%By harnessing the power of collective intelligence \cite{collective_intelligence}, herd-audit approaches tap into the knowledge, experiences, and perspectives of a wide range of individuals. This distributed effort can result in a more comprehensive and diverse audit of algorithms. Implementing herd-audit approaches not only addresses the limitations of traditional methods but also promotes transparency and inclusivity. It allows a broader segment of the population to actively participate in holding algorithmic systems accountable. 
By empowering end users as auditors, we can foster a more democratic approach to algorithmic auditing while minimizing costs and time investments.

\begin{figure}[ht]
    \centering \vspace{-3mm}
    \includegraphics[width=1.75in]{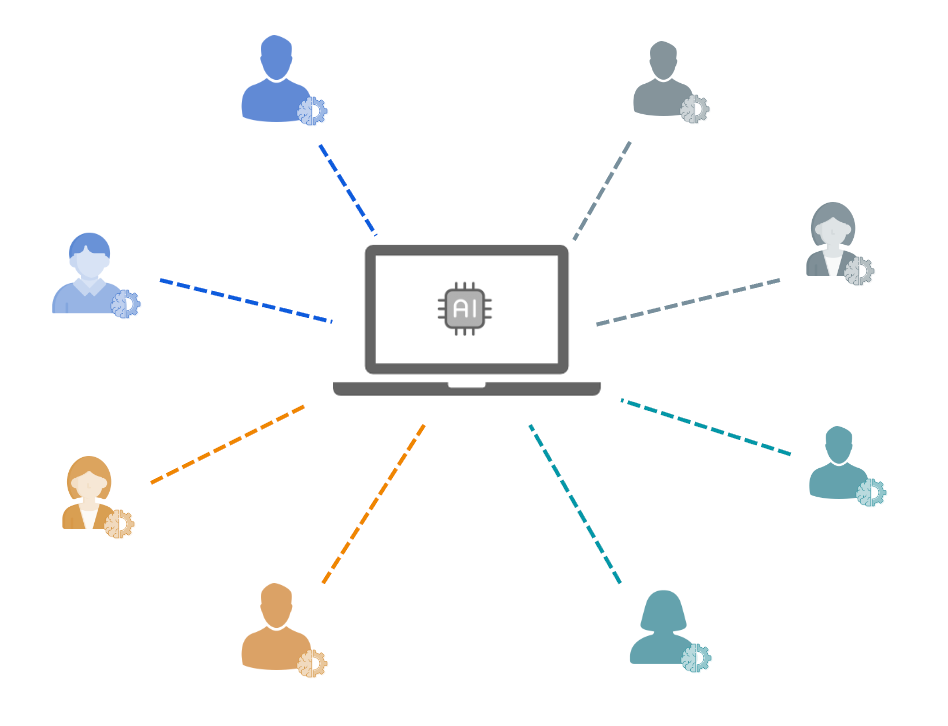}\vspace{-3mm}
    \caption{A herd of diverse end-users act as auditors to inspect the AI algorithm used in the developed product.}
    \label{fig:bigpicture}
    \vspace{-5mm}
\end{figure}

\noindent\textbf{Epistemic Disparity:} One significant challenge in implementing herd-audit approaches is the presence of epistemic disparity \cite{eps_injustice,eps_science}. Not all users possess the same level of expertise or information required to conduct comprehensive audits of algorithms. %There is a wide distribution of knowledge and variations in users' access to relevant information. These variations stem from differences in cognitive and reasoning capabilities among users. 
%A user-auditor who approaches the task with meticulous scrutiny is more likely to arrive at accurate outcomes during the evaluation process. Conversely, 
A user-auditor with limited cognitive resources may inadvertently provide opportunities for algorithm developers to evade their responsibility. %Addressing this challenge is essential to establish a reliable and effective approach that can elicit responsible algorithms through herd audit.
To some extent, incorporating audit into the algorithm design process itself establishes an accountability mechanism for developers. %When algorithms are transparent and capable of herd-audit, any deviations from the claimed performance are readily evident. For example, in relation to privacy protection, if an algorithm designer asserts the inclusion of differential privacy and the algorithm undergoes herd audit, any inconsistencies between the claimed privacy assurances and the actual performance will be revealed. 
This accountability mechanism acts as an incentive for algorithm developers to uphold their claims and create responsible algorithms.

\noindent\textbf{Game-Theoretic Framework:}
To design an effective herd-audit mechanism, this work aims to develop a comprehensive system framework that investigates the influence of herd-audit on algorithm developers. %as illustrated in Fig. \ref{fig:herd_audit}.
%One of the primary aims of this framework is to gain insights into the behavior and motivations of developers when subjected to herd audit. 
To accomplish the goal, the system framework adopts a Stackelberg game approach \cite{manshaei2013game,fang2021introduction}. In this approach, the developer assumes the role of the leader and determines the desired level of performance for differential privacy. The followers, comprising idiosyncratic end-users or auditors, are selected from a user population characterized by varying levels of epistemic capabilities. The proposed framework assumes that algorithms and their associated guarantees are clearly communicated to the end-users through a privacy protection agreement. This leader-and-follower structure allows us to analyze the optimal strategies employed by both the developer and the auditors, providing insights into the potential noncompliant behaviors of developers. 
%Furthermore, it helps in understanding how to incentivize developers to create responsible algorithms and explore the accountability mechanisms.

In order to capture the epistemic disparity experienced by end-users (auditors), this work employs a rational inattention model \cite{RI_discrete,RI_info_cost}, which takes into account the costs associated with accessing information during the decision-making process. 
%It acknowledges that end-users have limited cognitive resources and are unable to fully attend to or process all available information. One notable advantage of this model is its incorporation of the concept of mental effort, which provides a high-level abstraction of cognitive processes. 
%This abstraction facilitates the characterization of cognitive processes at a population level by aggregating diverse behaviors. 
%The rational inattention model has played an important role recently in explaining various economic phenomena involving individuals with cognitive constraints, such as consumer behavior and investment decisions.
We analyze the epistemic disparity among auditors, characterized by the epistemic factor, which measures the difficulty of accessing information. We find that auditors with lower epistemic factors exhibit higher audit confidence, indicating a better audit performance. Furthermore, our investigation reveals that a herd audit is a viable approach when auditors face lower costs in accessing information. In such circumstances, the algorithm developer is less likely to deviate significantly from their claims. Our findings highlight the importance of reducing epistemic injustice as well as lowering information costs to enhance the effectiveness of herd audits. By doing so, we can foster a more reliable and accountable environment for the development of algorithms.

%\noindent\textbf{Paper Organization:} The rest of this paper is organized as follows. Section 2 provides an overview of recent research on algorithm auditing and explores related game-theoretic approaches. Section 3 introduces the rational inattention model, which captures the reasoning process of herd auditors with varying levels of epistemic disparity. In Section 4, a Stackelberg herd audit game is proposed for a specific class of privacy-preserving algorithms. Section 5 discusses the equilibrium solutions that characterize the behaviors of both auditors and developers. Finally, Section 6 concludes the paper, summarizing the key findings and implications.

%%%%%%%%%%%%%%%%%%%%%%%%%
%\section{Related Work}
\noindent\textbf{Related Works}
Algorithm auditing refers to the process of evaluating the algorithms used in systems or applications to ensure they are fair, transparent, unbiased, and comply with ethical standards \cite{audit_survey}. In differential privacy, several machine-based verification methods have been proposed \cite{VDP_detection,dp_finder,num_dp}. 
%These machine auditors are capable of acquiring a large number of samples and utilizing the law of large numbers to estimate the probability of outcomes and verify the algorithms.
While there has been a rich literature on citizen science and its applications in crowdsensing \cite{crowdsensing}, crowdsourcing \cite{crowdsourcing}, and crowd defense \cite{crowd_defence}, herd audit is a concept in its infancy. It reduces auditing costs and poses a threat to developers, as public perception \cite{frye2021technology} can be influenced by the audit results.

The disparity in the capability of herd behaviors has been studied in collective intelligence \cite{yu2016mitigating,collective_intelligence,comeig2020rational,eickhoff2018cognitive}. The literature has examined the performance \cite{morris2012priming}, reliability \cite{karger2014budget}, and trustworthiness \cite{wang2016toward} of participants engaged in outsourced tasks. Processes such as risk and reputation management \cite{allahbakhsh2012reputation,yu2020crowdr} have been utilized to understand the differences among participants. 
%Notably, many studies have placed emphasis on the careful selection of participants to effectively achieve the goals of the task at hand. This body of work primarily focuses on understanding the impact of cognitive variabilities in herd audits, as participants exhibit diverse cognitive abilities in assessing whether algorithms perform as claimed.
%
Numerous studies have focused on different cognitive behaviors in humans, including cognitive-behavioral theory \cite{Cognitive-behavioral,why_cognitive,huang2023cognitive} which elucidates how thoughts, beliefs, and cognitive processes shape behavior, and 
%This theory underscores the significance of cognitive processes in interpreting and responding to environmental stimuli, while highlighting the potential for modifying these processes to achieve desired behavioral outcomes. 
the theory of mind \cite{anderson2004integrated} that attributes mental states such as beliefs and emotions to predict individuals' behavior. 
In our work, we employ the concept of rational inattention, as studied in \cite{Sims_2003}, which provides a framework that analyzes how decision-makers acquire information while considering associated costs, enabling investigations into cognitive impacts on audit decisions.
%Utilizing this framework offers several advantages, including a high-level abstraction of mental effort, which encompasses attention, perception, memory, and problem-solving in cognitive functioning. Additionally, it facilitates the exploration of the interplay between cognition and decision-making processes, enabling a focused investigation into the cognitive impact on audit decisions.

A game-theoretic approach is commonly employed to capture the threat posed by followers in dynamic games, such as ultimatum games \cite{rajtmajer2017ultimatum}, Stackelberg games \cite{casorran2019study}, bargaining games \cite{guerrero2018solving}, as well as contract  \cite{chen2016optimal,zhang2019mathtt} and incentive mechanisms designs \cite{zhu2012guidex,huang2021duplicity}. 
%In this study, we adopt a Stackelberg game framework to evaluate the dependability of herd auditors, who are modeled as randomly sampled idiosyncratic individuals, and to assess the opportunities available to the product designer to evade compliance. 
Recently, there has been increased interest in the investigation of evasion behaviors \cite{deception}. This includes exploiting evasion-aware detection methods \cite{evasion_detectors} and developing evaders for subsequent tests of collaborative cognition-assisted detector \cite{early_detection}. 

%Additionally, the literature on cyber deception has explored evasion strategies to gain a better understanding of attackers' stealthiness \cite{deception}.

%This approach allows us to draw insights and develop strategies to enhance the reliability and robustness of auditing systems in the face of potential evasion attempts.

%%%%%%%%%%%%%%%%%%%%%%%%%
\section{Herd Auditors with Epistemic Disparity}
In the context of herd-auditing an algorithm, the auditor is uncertain about the true state $\omega \in \Omega=\{g, b\}$, where $g$ indicates the null hypothesis, implying that the algorithm is consistent with the claim, while $b$ is for the alternative hypothesis, meaning that the algorithm does not comply. The prior belief of state $\omega$ can be denoted as $\mu(\omega)$, implying the auditor's uncertainty in the algorithm's compliance.

In order to reduce the uncertainty, the auditor can obtain information $s$ about the state according to the information-obtaining strategy $d(s|\omega)$. More specifically, $s$ can be viewed as the outcome of the algorithm, and $d(s|\omega)$ indicates how the auditor accesses (obtains) it. The information $s$ together with the obtaining strategy leads to a posterior belief of the state $\mu(\omega|s)=\frac{\mu(\omega)d(s|\omega)}{\sum_{\omega}\mu(\omega)d(s|\omega)}$.

Based on the information $s$ (correspondingly, the posterior belief $\mu(\omega|s)$), the auditor can select an element from a finite action set $a \in \mathcal{A}=\{T, F\}$, where $T$ means reporting algorithm compliance, while $F$ indicates reporting non-compliance. The decision rule  $\delta: \mathcal{S} \mapsto \mathcal{A}$ aims to maximize the expected utility of $u (\omega, a)$, where $u: \Omega \times \mathcal{A} \mapsto \mathbb{R}$ is the utility of choosing action $a$ when the state is $\omega$. 

\begin{figure}
    \centering \vspace{-3mm}
    \includegraphics[width=3in]{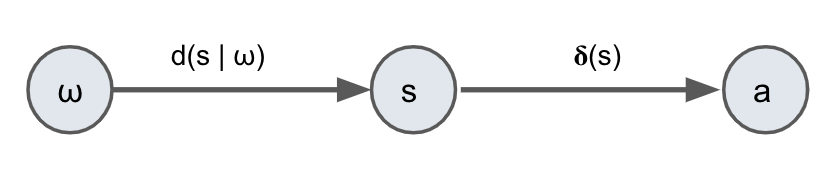}\vspace{-3mm}
    \caption{An illustration of how the auditor performs audit. The auditor acquires information $s$ about the unknown state $\omega$ using strategy $d$ and then makes a decision $a$ using strategy $\delta$.} \vspace{-3mm}
    \label{fig:stage}
\end{figure}

However, the acquisition of information can incur costs, which can be viewed as the discrepancy between the prior belief $\mu(\omega)$ and the posterior belief $\mu(\omega|s)$ regarding the state $\omega$. In conventional rational inattention research, a common method to model the cost is through the lens of Shannon mutual information. Furthermore, due to variations in epistemic disparities, the cost incurred for accessing information (i.e., reduction in uncertainty) differs among auditors. To account for this, we introduce the concept of an \textit{epistemic factor} for each auditor, denoted as $\lambda$, which quantifies the differences in the cost experienced by different auditors when reducing the same amount of uncertainty. The larger value of $\lambda$ implies harder access to relevant information, as the cost for the same amount of uncertainty reduction becomes higher.
To this end, the auditor's objective becomes 
\begin{equation}
    \max_{d, \delta} \mathbb{E}[u(\omega, a)] - \lambda I(\omega;s),
\label{eq:1}
\end{equation} where the expected utility is given by
\begin{equation}
    \mathbb{E}[u(\omega, a)]=\sum_{\omega}\sum_{a}\mu(\omega)u(\omega, a)\sum_{s:\delta(s)=a}d(s|\omega),
\end{equation} and the information cost is expressed as
\begin{equation}
    I(\omega;s)=\sum_{\omega}\sum_{s} d(s| \omega) \mu(\omega) \ln \frac{d(s | \omega)}{\sum_{\omega} d(s | \omega) \mu(\omega)}.
\end{equation}

\subsection{Bayes hypothesis testing as the auditor's decision rule}

Conventionally, Bayes hypothesis testing deals with the optimization problem
\begin{equation}
    \max_{\delta} \mathbb{E}[u(\omega, a)]=\sum_{\omega}\sum_{a}\mu(\omega)u(\omega, a)\sum_{\delta(s)=a}d(s|\omega)
\end{equation} with given distributions for both hypotheses $d(s|g)$ and $d(s|b)$ during decision-making, which coincides with the first term in the auditor's objective \eqref{eq:1}. According to detailed derivation in Appendix \ref{appendix: delta(s)}, the optimal decision rule can be written as
\begin{equation}
    \delta^*(s)=
    \begin{cases}
    T, \ &\frac{\mu(b)d(s|b)}{\mu(g)d(s|g)} < \frac{u(g, T)-u(g, F)}{u(b, F)-u(b, T)}, \\
    F, \ &\frac{\mu(b)d(s|b)}{\mu(g)d(s|g)} > \frac{u(g, T)-u(g, F)}{u(b, F)-u(b, T)}, \\
    \{T, F\}, &\frac{\mu(b)d(s|b)}{\mu(g)d(s|g)} = \frac{u(g, T)-u(g, F)}{u(b, F)-u(b, T)},
    \end{cases}
\end{equation} which leads us to a threshold decision rule and can be viewed as making a decision based on the posteriors. We represent the optimal decision rule with given $d(s|g)$ and $d(s|b)$ as $\delta^*_d(s)$, and denote the information set partitioned by $\delta^*_d(s)$ as 
\begin{equation}
    \begin{cases}
        S_{d, T}=\{s: \delta^*_d(s)=T\}, \\
        S_{d, F}=\{s: \delta^*_d(s)=F\}.
    \end{cases}
\end{equation}

\subsection{Auditor's choice of the information strategy} \label{sec:choice_info}

With the optimal decision rule $\delta^*_d$, the auditor's objective:
\begin{equation}
    \begin{aligned}
        \max_{d, \delta} \ &\mathbb{E}[u(\omega, a)] - \lambda I(\omega;s), \ \text{with} \  \delta=\delta^*_d, \\
    \end{aligned}
\end{equation} which leads to the constrained optimization problem
\begin{equation}
\begin{aligned}
    \max_{d} \ &\sum_{\omega}\sum_{a}\mu(\omega)u(\omega, a)\sum_{s:\delta^*_d(s)=a}d(s|\omega)\\
    &- \lambda \sum_{\omega}\sum_{s} d(s| \omega) \mu(\omega) \ln \frac{d(s | \omega)}{\sum_{\omega} d(s | \omega) \mu(\omega)}, \\
    \text{s.t.} \ & \sum_{s} d(s|\omega)= 1, d(s|\omega) \geq 0, \forall s \in \mathcal{S}, \forall \omega \in \Omega.
\end{aligned}
\label{eq:max_d}
\end{equation} 
With detailed derivations in Appendix \ref{appendix: d(s|w)} we arrive at:
\begin{align}
    &d(s|g)=\label{eq:d_g}
    \begin{cases}
        \frac{v(s)\exp(\frac{u(g, T)}{\lambda})}{y'(g)}, \ s \in S_{d, T},\\
        \frac{v(s)\exp(\frac{u(g, F)}{\lambda})}{y'(g)}, \ s \in S_{d, F},
    \end{cases} \\
    &d(s|b)=\label{eq:d_b}
    \begin{cases}
        \frac{v(s)\exp(\frac{u(b, T)}{\lambda})}{y'(b)}, \ s \in S_{d, T},\\
        \frac{v(s)\exp(\frac{u(b, F)}{\lambda})}{y'(b)}, \ s \in S_{d, F}.
    \end{cases}
\end{align}
The corresponding posterior belief $\mu(g|s)=\frac{\mu(g)d(s|g)}{\sum_{\omega}\mu(\omega)d(s|\omega)}=\frac{\mu(g)d(s|g)}{v(s)}$ can then be written as 
\begin{align}
    &
    \mu(g|s)=
    \begin{cases}
        \frac{\mu(g)\exp(u(g, T)/\lambda)}{y'(g)}, \ s \in S_{d, T},\\
        \frac{\mu(g)\exp(u(g, F)/\lambda)}{y'(g)}, \ s \in S_{d, F} \label{eq:post_g},
    \end{cases}
\end{align}
Note that the $s \in S_{d, T}$ case can be viewed as the posterior belief $\mu(g|s)$ given $s$ that  results in an action $a=T$ (i.e., $\mu(g|s)=\mu(g|T), \ s \in S_{d, T}$), while the $s \in S_{d, F}$ case can be viewed as the posterior belief $\mu(g|s)$ given $s$ that  results in an action $a=F$ (i.e., $\mu(g|s)=\mu(g|F), \ s \in S_{d, F}$). A similar expression can be found for $\mu(b|s)$.

\begin{align}
    &\mu(b|s)=
    \begin{cases}
        \frac{\mu(b)\exp(u(b, T)/\lambda)}{y'(b)}, \ s \in S_{d, T},\\
        \frac{\mu(b)\exp(u(b, F)/\lambda)}{y'(b)}, \ s \in S_{d, F} \label{eq:post_b},
    \end{cases}
\end{align} where $y'(g)$ and $y'(b)$ are corresponding normalization terms.

\begin{remark}
    For an auditor with epistemic factor $\lambda$, the information-obtaining strategy represented by the conditional probability $d(s|\omega)$ is chosen if its resulting posterior belief $\mu(\omega|s)$ maximizes the value of $\mathbb{E}[u(\omega, a)] - \lambda I(\omega;s)$. 
    %where $u(\omega, a)$ is the utility and $I(\omega;s)$ is the mutual information between $\omega$ and $s$ that measures the expected reduction of uncertainty from the prior to posterior belief caused by $d(s|\omega)$.
\label{remark:d_and_mu}
\end{remark}

The $\mu(g|s), \forall s \in S_{d, T}$, can also be interpreted as the \textit{audit confidence} for making the decision $a=T$ when observing the information $s$. Since $u(g, T) > u(g, F)$, it is evident that auditors with a smaller epistemic factor $\lambda$  have higher confidence in the audit process. This implies that auditors who can easily access relative information are more likely to perform better in the audit.

\section{Stackelberg Herd Audit Game}
To examine the impact of herd audit on the developer's incentive to behave irresponsibly, we formulate the interplay between the herd auditor (she) and the algorithm developer (he) as a Stackelberg herd audit game, depicted in Fig. \ref{fig:herd_audit}.

\subsection{Connection to differential privacy}
We begin with the definition of $\epsilon$-differential privacy. 
\begin{definition}[$\epsilon$-DP]
    A (randomized) mechanism $M:\mathcal{D} \mapsto \mathcal{B}$ is $\epsilon$-differentially private ($\epsilon$-DP) if for every pair of neighboring inputs $D_1, D_2 \in \mathcal{D}$, and for every (measurable) output set $B \in \mathcal{B}$, the probabilities of events $M(D_1; F, \epsilon) \in B$ and $M(D_2; F, \epsilon) \in B$ are closer than a factor of $e^{\epsilon}$:
    \begin{equation}
        Pr(M(D_1; F, \epsilon) \in B) \leq e^{\epsilon} \cdot Pr(M(D_2; F, \epsilon) \in B).
    \label{eq:definition}
    \end{equation}
\end{definition}

In the context of differential privacy, consider a scenario in which there is a public-known privacy protection agreement that requires $\epsilon^{\prime}$ privacy budget. However, since more privacy budget (which means decreasing the privacy protection and making the results more distinguishable) often leads to better algorithm accuracy, the algorithm developer has the incentive to use some $\epsilon > \epsilon^{\prime}$ when performing the algorithm, which creates irresponsibility. Hence, we consider the state $\omega=g$ means $\epsilon = \epsilon^{\prime}$ and the state $\omega=b$ means $\epsilon > \epsilon^{\prime}$. Since privacy protection is often achieved by adding noise, it is assumed that for an algorithm $M$ with input dataset $D$, the privacy budget $\epsilon$  results in an output distribution $p(M(D)|\epsilon)$ for later usage. 

\begin{figure}
    \centering \vspace{-3mm}
    \includegraphics[width=3.45in]{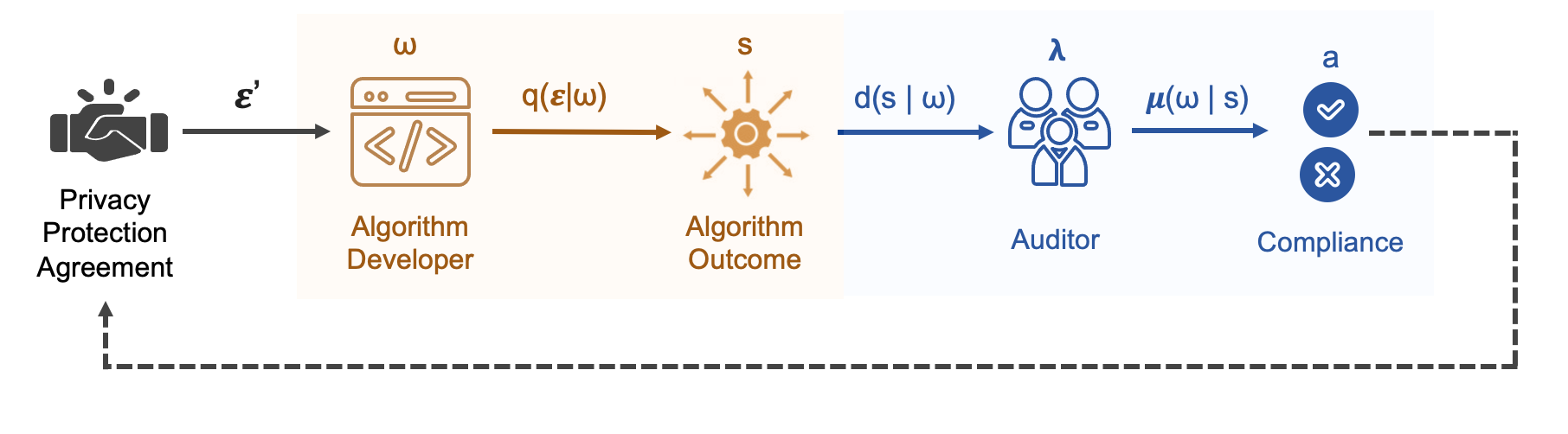}\vspace{-3mm}
    \caption{The Stackelberg herd audit game.}\vspace{-3mm}
    \label{fig:herd_audit}
\end{figure}

\subsection{Problem Setting for the Developer}

Consider two types of algorithm developers $g$ and $b$, and they play a mixed strategy for executing $\epsilon$, which are $q(\epsilon|g)$ and $q(\epsilon|b)$, respectively (for discrete choices of $\epsilon \in \mathcal{E}$). Each $\epsilon$ results in an algorithm accuracy $A(\epsilon)$, where $A: \mathcal{E} \mapsto \mathbb{R}$, under the assumption that a larger $\epsilon$ leads to better accuracy.

\begin{assumption}
    Given algorithm $M$ and input set $D$, a privacy budget $\epsilon$ leads to a unique output distribution $p(M(D)|\epsilon)$.
    \label{assump:output_distri}
\end{assumption}

\begin{assumption}
    For a given algorithm, the algorithmic accuracy under the privacy budget $\epsilon \in \mathcal{E} \subseteq \mathbb{R}$ is governed by $A: \mathcal{E} \mapsto \mathbb{R}$, and it is increasing in $\epsilon \in \mathcal{E}$.
    \label{assump:accu_increase}
\end{assumption}

In this context, the developer's strategy $q(\epsilon|\omega)$ given his type $\omega$ will lead to the distributions for the two hypotheses
\begin{align}
    Q_g(s) &= \sum\nolimits_{\epsilon}p(s|\epsilon) q(\epsilon|g), \label{eq:q_and_p_g}\\
    Q_b(s) &= \sum\nolimits_{\epsilon}p(s|\epsilon) q(\epsilon|b), \label{eq:q_and_p_b}
\end{align} 
where $p(s|\epsilon)$ is the distribution $p(M(D)|\epsilon)$ in Assumption \ref{assump:output_distri}.

\subsubsection{Responsible developer}
For a responsible algorithm developer, 
the mixed strategy $q(\epsilon|g)$ should have mass $0$ for $\epsilon > \epsilon^{\prime}$, which means that he always provides privacy protection at least complies with the agreement. Moreover, in order to maximize $A(\epsilon)$, a responsible algorithm developer tends to put all the mass on $\epsilon = \epsilon^{\prime}$ since $A(\epsilon) < A(\epsilon^{\prime}), \forall \epsilon < \epsilon^{\prime}$.

\begin{proposition}[Responsible Developer's Strategy]
    A responsible developer's mixed strategy reduces to a pure strategy by letting all the mass on $\epsilon = \epsilon^{\prime}$. Hence, $Q_g(s) = \sum_{\epsilon}p(s|\epsilon) q(\epsilon|g) = p(s|\epsilon^{\prime})$.
\end{proposition}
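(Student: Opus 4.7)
The plan is to derive the claim directly from the two governing properties of a responsible developer: (i) compliance forces the support of $q(\cdot|g)$ into the region $\epsilon \leq \epsilon^{\prime}$, and (ii) accuracy maximization, under the monotonicity given by Assumption \ref{assump:accu_increase}, singles out the largest admissible $\epsilon$. I would therefore treat the responsible developer as solving $\max_{q(\cdot|g)} \sum_{\epsilon} A(\epsilon) q(\epsilon|g)$ subject to $q(\epsilon|g) \geq 0$, $\sum_{\epsilon} q(\epsilon|g)=1$, and $q(\epsilon|g)=0$ for $\epsilon > \epsilon^{\prime}$.

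First I would observe that any mixed strategy supported on $\{\epsilon \in \mathcal{E} : \epsilon \leq \epsilon^{\prime}\}$ satisfies
\begin{equation}
    \sum_{\epsilon \leq \epsilon^{\prime}} A(\epsilon)\, q(\epsilon|g) \leq A(\epsilon^{\prime}) \sum_{\epsilon \leq \epsilon^{\prime}} q(\epsilon|g) = A(\epsilon^{\prime}),
\end{equation}
with equality achieved exactly when all the probability mass is placed on $\epsilon = \epsilon^{\prime}$ (the unique maximizer because $A$ is strictly increasing on $\mathcal{E}$ by Assumption \ref{assump:accu_increase}). This reduces the mixed strategy to the Dirac point mass $q(\epsilon|g) = 1$ if $\epsilon = \epsilon^{\prime}$ and $0$ otherwise.

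Finally, I would substitute this degenerate $q(\cdot|g)$ into the definition \eqref{eq:q_and_p_g} of $Q_g(s)$ to collapse the sum: every term with $\epsilon \neq \epsilon^{\prime}$ vanishes, leaving $Q_g(s) = p(s|\epsilon^{\prime})$, where $p(\cdot|\epsilon^{\prime})$ is the unique output distribution guaranteed by Assumption \ref{assump:output_distri}. There is no real obstacle here; the statement is essentially a direct consequence of two assumptions combined with the compliance constraint. The only subtlety worth flagging is verifying that $\epsilon^{\prime} \in \mathcal{E}$ so that the pure strategy is feasible in the discrete choice setting, which is implicit in the problem setup.
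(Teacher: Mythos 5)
Your argument is correct and follows essentially the same route as the paper, which likewise combines the compliance constraint (no mass on $\epsilon > \epsilon^{\prime}$) with the monotonicity of $A$ to conclude that all mass sits on $\epsilon = \epsilon^{\prime}$, and then substitutes into \eqref{eq:q_and_p_g}. Your explicit bounding inequality and the remark that $\epsilon^{\prime}$ must lie in $\mathcal{E}$ are minor formalizations of the paper's informal reasoning, not a different approach.
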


\subsubsection{Irresponsible developer}

However, it is important to consider various scenarios involving an irresponsible algorithm developer who prioritizes algorithm performance and disregards compliance with the agreement. If there is no auditor or no penalty imposed when the developer fails to pass the audit (i.e., when the auditor determines that $a = F$), the irresponsible developer can choose an extremely large value for $\epsilon$. Consequently, it is reasonable to assume that a penalty will be enforced if the irresponsible developer is detected. In such a situation, the irresponsible developer may attempt to maximize the probability of avoiding penalties, which is the probability of the auditor deciding $a = T$.

\begin{assumption}
    The irresponsible algorithm developer's mixed strategy will not put any mass on $\epsilon = \epsilon^{\prime}$. That is, $q(\epsilon^\prime|b)=0$.
\end{assumption}

\subsection{Revisiting the Auditor's Problem}

Considering that the penalty term for the irresponsible developer is influenced by the actions of the auditor, in terms of whether the irresponsible developer is caught or not, it is necessary to reexamine the problem from the auditor's perspective when the developer is also a strategic player aiming to evade the audit. We reformulate the auditor's problem by letting $u(g, T)=u(b, F)=0$ and setting the penalty terms $u(b, T)$ and $u(g, F)$ to negative values.
However, within the context of DP, it is important to note that the distributions for these hypotheses are predefined by the output distribution $p(s|\epsilon)$ and the developer's mixed strategy given the observed information $s$.
The audit confidences are analogous to those provided in  \eqref{eq:post_g} and \eqref{eq:post_b}.
\begin{assumption}
    Assume that $u(b, T) < 0$ and $u(g, F) < 0$ are the negative utilities for making wrong audit decisions.
\label{assump:penalty}
\end{assumption}

Given the distributions for the two hypotheses $Q_g(s)$ and $Q_b(s)$, the auditor aims to achieve the following:
\begin{equation}
\begin{aligned}
 \max _r \ \  &u(g, F) \sum_s\bigg[\sum_{\epsilon} p(s |\epsilon) q(\epsilon | g)\bigg] r(b | s) \\&+ u(b, T) \sum_s\bigg[\sum_{\epsilon} p(s |\epsilon) q(\epsilon | b)\bigg] r(g|s) \\&-\lambda \ \mathbb{E}_s\left[D_{k L}(r(\omega|s) \| \mu(\omega))\right], \\
 \text{s.t.}  \sum_{\omega}& r(\omega|s) = 1, r(w|s)\geq 0, \forall w\in\{g,b\}, \forall s \in \mathcal{S},
 \label{eq:auditor_prob}
\end{aligned}
\end{equation} where the first two terms put negative weights on the audit error, and the last term quantifies the expected reduction in uncertainty for the state $\omega$, measured in terms of the Kullback–Leibler (KL) divergence: $$\lambda \sum_s\bigg[Q_g(s) + Q_b(s)\bigg] \sum_\omega r(\omega| s) \log \frac{r(\omega|s)}{\mu(\omega)}.$$ 
%This term encapsulates the epistemic disparity and its influence on the informational cost by the parameter $\lambda$.
%
The decision of $r(g|s)$, $r(b|s)$ already incorporate the auditor's information strategy $d(s|\omega)$ since $r(\omega|s)=\frac{\mu(\omega)d(s|\omega)}{\sum_{\omega}\mu(\omega)d(s|\omega)}$. %(The observation follows from Remark \ref{remark:d_and_mu} in Section \ref{sec:choice_info}.)

%\begin{remark}
   % It might cause some confusion between the probability distribution of an output sample $s$ under both hypotheses $Q_g(s), Q_b(s)$, and the information obtaining strategy for the auditor $d(s|g), d(s|b)$.
%   It is important to distinguish between the probability distributions of an output sample s under both hypotheses, $Q_g(s), Q_b(s)$, and the information obtaining strategy for the auditor, $d(s|g), d(s|b)$. %This distinction helps avoid confusion between the two concepts.
%    Note that $Q_g(s), Q_b(s)$ are given by the output distribution based on assumption \ref{assump:output_distri} and the developer's mixed strategy $q(\epsilon|g), q(\epsilon|b)$, and that the information obtaining strategy on how to observe output sample $d(s|g), d(s|b)$ mainly comes from the auditor's preference over utility $u(\omega, a)$ and the epistemic factor $\lambda$, which leads to the audit confidence (posterior belief of the state) $r(g|s)$ and $r(b|s)$.
%\end{remark}

\subsection{Revisit the Irresponsible Developer's Problem}
Until now, the irresponsible developer's objective has become the following.
\begin{equation}
\begin{aligned}
    \max _{q(\cdot|b)}  &\sum_{\epsilon} q(\epsilon | b) A(\epsilon) + \beta  \sum_s\bigg[\sum_{\epsilon} p(s|\epsilon) q(\epsilon|b)\bigg] r(g |s),
\label{prob:bad_developer}
\end{aligned}
\end{equation} with $r(g |s)$ comes from the auditor's problem. The former term is the expected algorithm accuracy, and the latter term corresponds to the false negative rate of the auditor's decision, which is the rate of the irresponsible developer successfully passing the audit (and thus, the irresponsible developer seeks to maximize it). Note that $\beta >0$ indicates the irresponsible developer's preference for the two goals.

\section{Equilibrium Analysis}
For illustrative purposes, we work through an example where $|\mathcal{E}|=3$ in Appendix \ref{appendix: example}.
%For simplicity, we initially consider a scenario where the cardinality of the set $\mathcal{E}$ is three; i.e., $|\mathcal{E}|=3$ with $\mathcal{E}=\{\epsilon_l, \epsilon_m, \epsilon_h\}$, where $\epsilon_l < \epsilon_m < \epsilon_h$ and it's assumed that the claimed differential privacy budget is $\epsilon^{\prime} = \epsilon_l$.
%A more general case will be presented in Section \ref{sec:general}.
Besides, we assume that the distinguishability\textemdash quantified by distance measures such as the Kullback–Leibler divergence\textemdash between the output distributions $p(\cdot|\epsilon)$ and $p(\cdot|\epsilon^{\prime})$ increases when the difference between $\epsilon$ and $\epsilon^{\prime}$ expands.
%
%Then, $Q_g(s) = \sum_{\epsilon}p(s|\epsilon) q(\epsilon|g) = p(s|\epsilon^{\prime}) = p(s|\epsilon_l)$ and $Q_b(s) = \sum_{\epsilon}p(s|\epsilon) q(\epsilon|b) = p(s|\epsilon_m) q(\epsilon_m|b) + p(s|\epsilon_h) q(\epsilon_h|b)$. 

\subsection{The auditor's optimal strategy}
With the example in Appendix \ref{appendix: example} and derivations in Appendix \ref{appendix: r(w|s)}, the auditor's $r(g|s)$ and $r(b|s)$ that optimally solves problem \eqref{eq:auditor_prob} can be written as:
\begin{equation}
    r(g|s)=\mu(g)\exp \left(u(b, T)Q_b(s)/\lambda v(s)\right)/y^\prime(s),
    \label{eq:r_g}
\end{equation}
\begin{equation}
    r(b|s)=\mu(b)\exp \left(u(g, F)Q_g(s)/\lambda v(s)\right)/y^\prime(s),
\label{eq:r_b}
\end{equation}and $y^\prime(s)=\mu(g)\exp \big(\frac{u(b, T)Q_b(s)}{\lambda v(s)}\big)+\mu(b)\exp \big(\frac{u(g, F)Q_g(s)}{\lambda v(s)}\big)$ denotes the normalization term. We can observe that $r(g|s)$ and $r(b|s)$ share a similar form as \eqref{eq:post_g} and \eqref{eq:post_b}.

\begin{proposition}
The strategy specified by  \eqref{eq:r_g} and \eqref{eq:r_b} is optimal for the auditor with epistemic factor $\lambda$. 
\end{proposition}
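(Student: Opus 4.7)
The plan is to establish optimality by a standard Lagrangian/KKT argument, exploiting the fact that the auditor's objective in \eqref{eq:auditor_prob} is additively separable across the information realizations $s$. Because the negative KL-divergence contribution is strictly concave in $r(\cdot|s)$, the remaining two summands are affine in $r$, and the constraint set for each $s$ is a closed 1-simplex, any interior stationary point of the per-$s$ Lagrangian will automatically be the unique global maximizer.

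First I would rewrite the objective by pulling the outer sum over $s$ outside and collecting all terms that involve $r(g|s)$ and $r(b|s)$ into a per-$s$ contribution $J_s(r(g|s),r(b|s)) = u(b,T)Q_b(s)\,r(g|s) + u(g,F)Q_g(s)\,r(b|s) - \lambda v(s)\sum_{\omega} r(\omega|s)\log(r(\omega|s)/\mu(\omega))$, with $v(s)=Q_g(s)+Q_b(s)$ as given in the statement of \eqref{eq:auditor_prob}. Since the equality constraint $r(g|s)+r(b|s)=1$ and the non-negativity constraints also decouple across $s$, the global problem reduces to independently maximizing $J_s$ on the 1-simplex for each $s$.

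Next I would introduce a Lagrange multiplier $\gamma(s)$ for the equality constraint and write the stationarity conditions. The derivative with respect to $r(g|s)$ gives $u(b,T)Q_b(s)-\lambda v(s)[\log(r(g|s)/\mu(g))+1]-\gamma(s)=0$, and the derivative with respect to $r(b|s)$ gives the analogous equation with $u(g,F)Q_g(s)$ and $\mu(b)$. Solving each equation yields $r(g|s)\propto \mu(g)\exp(u(b,T)Q_b(s)/(\lambda v(s)))$ and $r(b|s)\propto \mu(b)\exp(u(g,F)Q_g(s)/(\lambda v(s)))$, with a common proportionality factor $\exp(-\gamma(s)/(\lambda v(s))-1)$ absorbing both $\gamma(s)$ and the additive $+1$. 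Imposing $r(g|s)+r(b|s)=1$ pins down this factor as $1/y'(s)$ with the exact normalization given in the statement, recovering \eqref{eq:r_g} and \eqref{eq:r_b}; the non-negativity constraints are automatically slack because the exponential is positive.

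Finally, I would close the argument by upgrading the stationary point to a global maximum via concavity. The Hessian of $J_s$ in $(r(g|s),r(b|s))$ is diagonal with entries $-\lambda v(s)/r(\omega|s)<0$ (valid since $\lambda>0$ and $v(s)>0$ whenever $s$ is observable under at least one hypothesis), so $J_s$ is strictly concave; hence the KKT point is the unique global maximizer of each sub-problem, and summing the maxima across $s$ yields global optimality for \eqref{eq:auditor_prob}. The main subtlety I anticipate is bookkeeping the weighting $v(s)$ inside the KL term correctly---the statement uses the unnormalized mass $Q_g(s)+Q_b(s)$ rather than the Bayesian marginal $\mu(g)Q_g(s)+\mu(b)Q_b(s)$---but once this convention is respected, the computation mirrors the derivation of $d(s|\omega)$ already carried out in Section~II-B, and the only genuine work is verifying separability and strict concavity.
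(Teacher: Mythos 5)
Your proposal is correct and follows essentially the same route as the paper's own derivation in Appendix D: both recognize the problem as a concave maximization over a convex (simplex) feasible set, form the Lagrangian, solve the first-order condition $u(b,T)Q_b(s)-\lambda v(s)\bigl(\log\frac{r(g|s)}{\mu(g)}+1\bigr)-y=0$ (and its analogue for $r(b|s)$), and absorb the multiplier into the normalization $y'(s)$. Your additional remarks on per-$s$ separability, strict concavity of the Hessian, and the unnormalized weighting $v(s)=Q_g(s)+Q_b(s)$ are consistent with, and slightly more explicit than, the paper's argument.
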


\begin{remark}
    The results coincide with the intuition. We first take a look at $r(g|s)$. Recall that $\frac{u(b, T)}{\lambda}$ is negative. If the penalty term $u(b, T)$ is the same across all the auditors, the auditor with a larger epistemic factor $\lambda$ achieves $r(g|s)$ that is closer to $\mu(g)$. Combining with the auditor's objective in the maximization problem \eqref{eq:auditor_prob}, it means that the larger-$\lambda$ auditor might have a larger false negative error. Similarly, for $r(b|s)$, the larger-$\lambda$ auditor might have a larger false positive error.
\end{remark}

\subsection{The irresponsible developer's optimal strategy}
The irresponsible developer's problem \eqref{prob:bad_developer} is organized into 
\begin{equation}\label{eq:bad_developer_obj_2}
    \begin{aligned}
    \sum_{\epsilon} &q(\epsilon | b) A(\epsilon) + \beta \ \sum_s\left[\sum_{\epsilon} p(s|\epsilon) q(\epsilon|b)\right] r(g |s)\\
    &= \sum\limits_{\epsilon_{i}\in \mathcal{E}}q(\epsilon_i|b)\bigg[A(\epsilon_i) + \beta \sum_{s}r(g|s)p(s|\epsilon_i)\bigg].
\end{aligned}
\end{equation}
The irresponsible developer determines his optimal pure strategy $\epsilon$ to maximize (\ref{eq:bad_developer_obj_2}).
Specifically, the irresponsible developer assigns $q(\epsilon|b)=1$ to the $\epsilon$ that achieves the largest $\big[A(\epsilon) + \beta \sum_{s}r(g|s)p(s|\epsilon)\big]$. This leads us to the following propositions and remarks.
\begin{proposition}
    The irresponsible developer's optimal strategy is choosing the $\epsilon$ that maximizes $\big[A(\epsilon) + \beta \sum_{s}r(g|s)p(s|\epsilon)\big]$.
\end{proposition}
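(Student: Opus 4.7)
The plan is to exploit the linearity of the rearranged objective \eqref{eq:bad_developer_obj_2} in the mixed strategy $q(\cdot|b)$ and to invoke the classical fact that a linear functional on a probability simplex attains its maximum at a vertex, i.e., on a pure strategy. Concretely, treating the auditor's equilibrium rule $r(g|s)$ as fixed (we are computing the developer's best response in the Stackelberg game), I would define the coefficient $c(\epsilon) := A(\epsilon) + \beta \sum_{s} r(g|s)\, p(s|\epsilon)$, so the developer's objective rewrites as $\sum_{\epsilon \in \mathcal{E}} q(\epsilon|b)\, c(\epsilon)$, a linear function of the vector $q(\cdot|b)$.

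Next I would identify the feasible set: the probability simplex over $\mathcal{E}$ intersected with the hyperplane $q(\epsilon^{\prime}|b) = 0$ imposed by the assumption that an irresponsible developer places no mass on $\epsilon^{\prime}$. This intersection is itself a compact simplex over $\mathcal{E}\setminus\{\epsilon^{\prime}\}$, whose vertices are exactly the pure strategies $q(\epsilon^{\ast}|b) = 1$ for $\epsilon^{\ast} \in \mathcal{E}\setminus\{\epsilon^{\prime}\}$. The elementary bound
\[
\sum_{\epsilon \neq \epsilon^{\prime}} q(\epsilon|b)\, c(\epsilon)\ \leq\ \max_{\epsilon \neq \epsilon^{\prime}} c(\epsilon),
\]
valid for any distribution and achieved with equality whenever all mass concentrates on an argmax, then yields the claim: any $\epsilon^{\ast} \in \argmax_{\epsilon \neq \epsilon^{\prime}} \bigl[ A(\epsilon) + \beta \sum_{s} r(g|s)\, p(s|\epsilon)\bigr]$ induces an optimal pure strategy, and ties in the argmax correspond to convex combinations of equally good pure optimizers.

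The conceptual content is light, and the main obstacle is interpretive rather than computational. One must be explicit that $r(g|s)$ depends on the developer's strategy through $Q_g(s)$ and $Q_b(s)$ via \eqref{eq:q_and_p_g}--\eqref{eq:q_and_p_b}, so the linearity argument is valid only for the \emph{best-response} step of the Stackelberg game, with the leader's $q(\cdot|b)$ varied while the follower's $r(g|s)$ is frozen at its equilibrium value. Framing the proof in this way makes clear that the proposition characterizes the developer's best response given the auditor's equilibrium decision rule, which is exactly the quantity the Stackelberg framework asks us to compute, and simultaneously justifies why the search can be restricted to pure strategies on $\mathcal{E}\setminus\{\epsilon^{\prime}\}$.
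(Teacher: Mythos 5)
Your proposal is correct and follows essentially the same route as the paper: the paper simply rearranges the objective into $\sum_{\epsilon} q(\epsilon|b)\bigl[A(\epsilon)+\beta\sum_s r(g|s)p(s|\epsilon)\bigr]$ as in \eqref{eq:bad_developer_obj_2} and concludes that all mass goes on the argmax, which is precisely your linear-functional-on-a-simplex argument made explicit. Your added caveat that $r(g|s)$ must be held fixed (since it depends on $q(\cdot|b)$ through $Q_b$) is a useful clarification that the paper leaves implicit, but it does not change the substance of the argument.
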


\begin{proposition}
    If algorithm accuracy $A(\epsilon)$ is increasing in $\epsilon$, the irresponsible developer  always chooses the largest $\epsilon$ if $r(g|s)=r(g|s^\prime), \forall s, s^\prime \in \mathcal{S}$.
\label{prop:same_s}
\end{proposition}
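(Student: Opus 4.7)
The plan is to exploit the hypothesis that $r(g|s)$ is constant in $s$ to eliminate the audit-evasion term from the irresponsible developer's objective, reducing the problem to pure accuracy maximization. Concretely, I will combine Proposition~3 (the optimal $\epsilon$ maximizes $A(\epsilon) + \beta\sum_s r(g|s)\,p(s|\epsilon)$) with Assumption~\ref{assump:accu_increase} (monotonicity of $A$ in $\epsilon$).

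First, I would let $c := r(g|s)$ denote the common value guaranteed by the hypothesis $r(g|s)=r(g|s')$ for all $s,s'\in\mathcal{S}$. Pulling this constant outside the sum over $s$ in the evasion term yields
\begin{equation*}
\beta \sum_{s} r(g|s)\, p(s|\epsilon) \;=\; \beta c \sum_{s} p(s|\epsilon) \;=\; \beta c,
\end{equation*}
where the second equality uses that $p(\cdot|\epsilon)$ is a probability distribution on $\mathcal{S}$ (Assumption~\ref{assump:output_distri}). Therefore the per-$\epsilon$ payoff inside the bracket in \eqref{eq:bad_developer_obj_2} becomes $A(\epsilon) + \beta c$, in which only the first summand depends on $\epsilon$.

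Next, I would invoke Proposition~3, which states that the optimal pure strategy of the irresponsible developer places all mass on any $\epsilon$ that maximizes the bracketed expression. Since $\beta c$ is independent of $\epsilon$, this maximization is equivalent to $\max_{\epsilon\in\mathcal{E}} A(\epsilon)$. By Assumption~\ref{assump:accu_increase}, $A$ is increasing in $\epsilon$, so the maximizer is the largest element of $\mathcal{E}$, giving the claim.

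I do not anticipate any real obstacle here: the argument is essentially a one-line normalization followed by invocation of two stated assumptions. The only subtlety worth flagging in the writeup is that the conclusion relies on $\mathcal{E}$ having a largest element (which is implicit from the discrete-choice setup $\epsilon\in\mathcal{E}$ used throughout Section~III-B); if $\mathcal{E}$ were unbounded above, the statement would need to be phrased as a supremum rather than a maximum, but the discrete finite setting of the paper avoids this issue.
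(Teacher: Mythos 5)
Your proof is correct and rests on the same key observation as the paper's: the constancy of $r(g|s)$ together with the normalization $\sum_s p(s|\epsilon)=1$ makes the evasion term independent of $\epsilon$, reducing the choice to maximizing $A(\epsilon)$. The paper only sketches this for $|\mathcal{S}|=2$ and the pairwise comparison of $\epsilon_m$ versus $\epsilon_h$, whereas your version factors out the constant directly and so covers arbitrary finite $\mathcal{S}$ and $\mathcal{E}$ in one step.
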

\begin{proof}
    We sketch the proof in Appendix \ref{appendix: same_s}.
\end{proof}

\begin{remark}
    The irresponsible developer violates as much as possible when the epistemic factor for the auditor $\lambda = \infty$.
\end{remark}

\begin{remark}
    When the auditor's epistemic factor $\lambda$ is small, indicating easy access to relevant information, an irresponsible developer is more likely to violate the claim.
\end{remark}

\begin{remark}
    %If the auditor's epistemic factor $\lambda$ is large, then the bad-developer who has a larger $\beta$ (values the successfully passing rate more) might also tend to violate the claim more.
    If the auditor's epistemic factor $\lambda$ is large, it is likely that an irresponsible developer with a larger $\beta$ (placing more value on the success rate of passing audits) will also tend to violate the claim more severely.
\end{remark}

\subsection{Auditor's audit confidence and epistemic factor}
With respect to Fig. \ref{fig:post_cost}, the optimal solution to the auditor's problem given by \eqref{eq:r_g} and \eqref{eq:r_b} establishes a relationship between the epistemic factor $\lambda$ and the auditor's confidence $r(\cdot|s)$ under fixed utilities $u(\omega, a)$.
Let $\chi(s)=[u(g, F)Q_g(s)-u(b, T) Q_b(s)]/v(s)$ and $\phi(s) =  [u(g, F)Q_g(s)+u(b, T) Q_b(s)]/(\lambda v(s))$ .
Taking the partial derivative of $r(g|s)$ with respect to $\lambda$ yields
$
        \frac{\partial r(g|s)}{\partial \lambda} 
        = \frac{\mu(g) \mu(b) \chi(s) \exp \left(\phi(s)\right)}{\lambda^2 y^\prime(s)^2}. 
$
Here, if the developer is irresponsible, then he never chooses a privacy budget $\epsilon$ that is equal to the claimed budget $\epsilon_{0}$.
Hence, $\chi(s)\neq0$.
The term $\partial r(g|s)/\partial \lambda$ is (strictly) positive if $\chi(s)>0$ and (strictly) negative otherwise.
When $\chi(s)<0$, $r(g|s)$ is close to $1$ when $\lambda$ goes close to $0$. 
Furthermore, the audit confidences for $g$ and $b$ become close to $0.5$ when $\lambda$ increases, which reveals that higher $\lambda$ leads to a weaker incentive to acquire more accurate information, thereby inducing lower audit confidences. 

Similarly, 
$
        \frac{\partial r(b|s)}{\partial \lambda} 
        = \frac{-\mu(g) \mu(b) \chi(s) \exp \left(\phi(s)\right)}{\lambda^2 y^\prime(s)^2}.
$ 
The term $\partial r(b|s)/\partial \lambda$ is positive if $-\chi(s)>0$ and negative otherwise.
When $-\chi(s)<0$, $r(b|s)$ is close to $1$ when $\lambda$ goes close to $0$. 
Furthermore, the audit confidences for $g$ and $b$ become closer to $0.5$ when $\lambda$ increases, which coincides with the setting that higher $\lambda$ leads to a weaker incentive to acquire more accurate information, thereby inducing lower audit confidences.
\begin{figure}[h]
\centering \vspace{-3mm}
    \begin{subfigure}[t]{0.24\textwidth}
        \centering
        \includegraphics[width=1.7in]{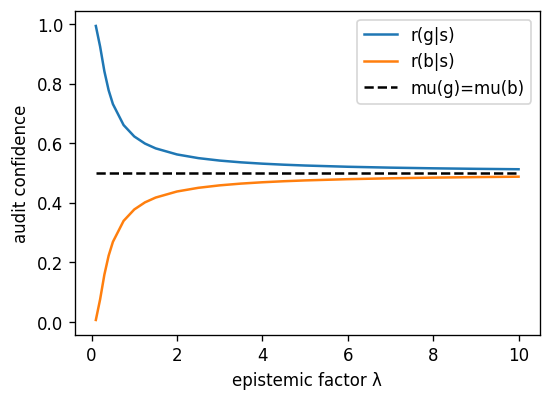}
        \caption{}
        \label{fig:post_and_cost}
    \end{subfigure}
    \begin{subfigure}[t]{0.24\textwidth}
        \centering
        \includegraphics[width=1.7in]{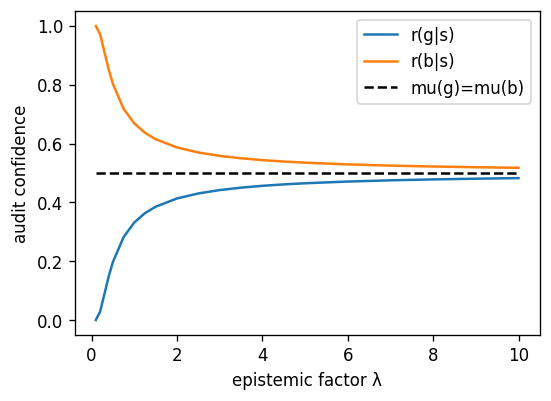}
        \caption{}
        \label{fig:post_cost_}
    \end{subfigure}
\caption{Trends between the auditor's confidence (posterior belief) and the auditor's epistemic factor ($\mu(g)=\mu(b)=0.5, u(b, T)=u(g, F)=-1$). Left: $\chi(s)<0$ ($Q_b(s)/v(s)=0.25$). Right: $-\chi(s)<0$ ($Q_b(s)/v(s)=0.85$).}
\vspace{-5mm}
\label{fig:post_cost}
\end{figure}
Note that audit confidence is determined by optimizing the objective, which consists of penalties for audit errors and costs associated with information acquisition. In this context, it is important to carefully select reasonable intervals for $u(\omega, a)$ and $\lambda$. In practice, as auditors are end-users for the algorithm, and given the disparities in end-users across different algorithms, the range for the epistemic factor needs to be contingent upon the ease with which corresponding end-users of the algorithm can access relevant information.

\subsection{Irresponsible developer's choice and auditor's confidence}
According to (\ref{eq:q_and_p_g}) and (\ref{eq:q_and_p_b}), the irresponsible developer's budget choice determines $Q_{b}(\cdot)$ given $p(\cdot)$.
Hence, \eqref{eq:r_g} and \eqref{eq:r_b} (shown in Fig. \ref{fig:post_and_Q}) also establish a relationship between the irresponsible developer's choice and the auditor's confidence.

\begin{figure}[ht]
    \centering \vspace{-3mm}
    \includegraphics[width=2in]{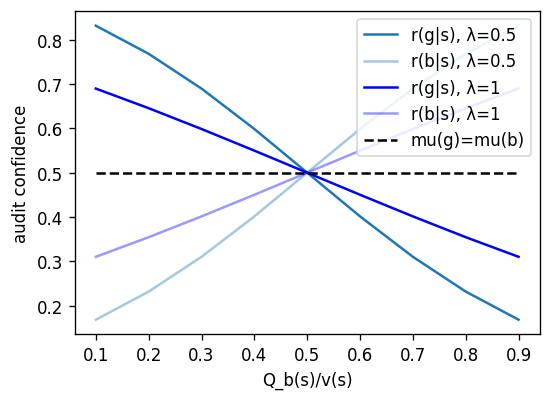}\vspace{-3mm}
    \caption{The trend between the auditor's confidence and the irresponsible developer's choice that leads to $Q_b(s)/v(s)$. Here, $\mu(g)=\mu(b)=0.5, u(b, T)=u(g, F)=-1, \lambda=1$.} \vspace{-3mm}
    \label{fig:post_and_Q}
\end{figure}

By taking partial derivative of $r(g|s)$ with respect to $\frac{Q_b(s)}{v(s)}$, we obtain 
$
        \frac{\partial r(g|s)}{\partial Q_b(s)/v(s)} 
        = \frac{\mu(g) \mu(b) (u(g, F)+u(b, T)) \exp \left(\phi(s)\right)}{\lambda y^\prime(s)^2}, 
$ which is negative since $u(g, F)+u(b, T) <0$. Additionally, as the value of $\lambda$ increases (when auditors incur higher costs for information acquisition), the magnitude of $\partial r(g|s)/\partial (\frac{Q_b(s)}{v(s)})$ decreases, implying relatively less influence on audit confidence. This trend is evident in Figure \ref{fig:post_and_Q}, where a greater $\lambda$ corresponds to a flatter curve for $r(g|s)$.

\section{Discussion and Conclusions}
Herd audit is a collective mechanism that empowers users to hold algorithm developers accountable, fostering the development of compliant and responsible digital products for the betterment of society. In this study, we examine herd audit through a game-theoretic lens, capturing the interactions between an idiosyncratic user and a privacy-preserving algorithm developer. Our framework adopts a Stackelberg game approach, enabling us to assess the impact of herd audit on responsible algorithm design and understand selfish and irresponsible strategies in worst-case scenarios.

We have specifically explored the presence of auditors with varying cognitive and reasoning capabilities, capturing epistemic disparities. Within our game-theoretic framework, we have consolidated the concept of rational inattention. The optimal strategy for auditors underscores the importance of easy access to relevant information, which enhances their confidence in the herd-audit process. Similarly, the optimal decision for algorithm developers has revealed that herd audit is a viable approach when auditors face lower costs in accessing knowledge, as denoted by smaller epistemic factors. Based on our findings, we conclude that herd audit poses a credit threat to developers and plays a vital role in promoting the responsible development of privacy-preserving algorithms. In future work, we aim to enrich the game-theoretic framework by incorporating end-users' incentives. This extension allows us to design an incentive mechanism that encourages participation in herd audits. Additionally, we plan to explore the fusion of distributed audits alongside a central audit center. Leveraging tools from decentralized hypothesis testing, game theory, information theory, and differential privacy, this research direction holds promise for advancing the field further.

%%%%%%%%%%%%%%%%%%%%%%%%%%%%%%%

\bibliography{reference}
\bibliographystyle{IEEEtran}

%%%%%%%%%%%%%%%%%%%%%%%%%%%%%%%
\appendix
\subsection{Proof of optimal decision rule $\delta^*(s)$}\label{appendix: delta(s)}
For the null hypothesis, $d(s|g)$, and the alternative hypothesis, $d(s|b)$, the expected utility for problem $HT$ in \eqref{eq:4} can be reformulated as
    \begin{equation*}
    \begin{aligned}
    \mathbb{E}[u(\omega, a)]
    &= \mu(g)u(g, F) + \mu(b)u(b, F) \\
    &+ \sum_{\delta(s)=T} \bigg\{\mu(g)\big[u(g, T)-u(g, F)\big]d(s|g)\\
    &-\mu(b)\big[u(b, F)-u(b, T)\big]d(s|b)\bigg\}.
    \end{aligned}
    \end{equation*} Therefore, to maximize the expected utility, the auditor must decide $\delta(s)=T$ if $\mu(g)\big[u(g, T)-u(g, F)\big]d(s|g)>\mu(b)\big[u(b, F)-u(b, T)\big]d(s|b)$. This completes the proof.

\subsection{Proof of optimal information strategy $d$}\label{appendix: d(s|w)}
To analyze the problem, we use the method of Lagrange multipliers and denote 
\begin{align*}
    &J(d, y) = \sum_{\omega}\sum_{a}\mu(\omega)u(\omega, a)\sum_{s:\delta^*_d(s)=a}d(s|\omega)\\
    &- \lambda \sum_{\omega}\sum_{s} d(s| \omega) \mu(\omega) \ln \frac{d(s | \omega)}{\sum_{\omega} d(s | \omega) \mu(\omega)}
    -\sum_{\omega}y(\omega)d(s|\omega),
\end{align*} with the last term corresponding to the constraint that $d(s|\omega)$ should be a conditional probability mass function.

Then, for $d(s|g)$ with $s \in S_{d, T}$, according to the first-order and the second-order condition,
\begin{align*}
    &\frac{\partial J(d, y)}{\partial d(s|g)} = \mu(g)u(g, T)-\lambda \mu(g) \log(\frac{d(s|g)}{v(s)})-y(g)=0,\\
    &\frac{\partial^2 J(d, y)}{\partial d(s|g)^2} = -\lambda \mu(g) \frac{\mu(b)d(s|b)}{d(s|g)v(s)} \leq 0, 
\end{align*} where $v(s)=\sum_{\omega}d(s|\omega)\mu(\omega)$. Letting $\log(y'(g))=\frac{y(g)}{\lambda \mu(g)}$ leads to the following $d(s|g)$ that maximizes \eqref{eq:max_d}.
\begin{align*}
    &\lambda \mu(g) \bigg[\frac{u(g, T)}{\lambda}-\log(\frac{d(s|g)}{v(s)})-\log y'(g)\bigg]=0,\\
    &d(s|g) = \frac{v(s)\exp(\frac{u(g, T)}{\lambda})}{y'(g)}.
\end{align*}Note that $y'(g)$ is the normalization term. Similarly, for $d(s|g)$ with $s \in S_{d, F}$ and $d(s|b)$, we can get the information-obtaining strategy in \eqref{eq:d_g} and \eqref{eq:d_b}

\subsection{An illustrative example for equilibrium analysis} \label{appendix: example}
We consider a scenario where the cardinality of the set $\mathcal{E}$ is three; i.e., $|\mathcal{E}|=3$ with $\mathcal{E}=\{\epsilon_l, \epsilon_m, \epsilon_h\}$, where $\epsilon_l < \epsilon_m < \epsilon_h$ and it's assumed that the claimed differential privacy budget is $\epsilon^{\prime} = \epsilon_l$.
Then, the two hypotheses become $$Q_g(s) = \sum_{\epsilon}p(s|\epsilon) q(\epsilon|g) = p(s|\epsilon^{\prime}) = p(s|\epsilon_l),$$ $$Q_b(s) = \sum_{\epsilon}p(s|\epsilon) q(\epsilon|b) = p(s|\epsilon_m) q(\epsilon_m|b) + p(s|\epsilon_h) q(\epsilon_h|b).$$ 

According to derivations in Appendix \ref{appendix: r(w|s)}, the strategy specified by \eqref{eq:r_g} and \eqref{eq:r_b} is optimal for the auditor with epistemic factor $\lambda$. 

We then shift our focus to the irresponsible developer's strategy. The irresponsible developer endeavors to enhance algorithmic accuracy while concurrently maximizing the probability of evading detection by the auditor, thereby increasing the likelihood of being perceived as a responsible developer.
Hence, the irresponsible developer's decision-making can be described by the following optimization problem:
\begin{equation}\label{eq:bad_developer_prob_1}
    \begin{aligned}
        \max _{q(\cdot|b)} \ \ &\big[q(\epsilon_m | b) A(\epsilon_m) + q(\epsilon_h | b) A(\epsilon_h)\big]\\
        &+\beta \ \sum_s\bigg[p(s|\epsilon_m)q(\epsilon_m|b)+p(s|\epsilon_h)q(\epsilon_h|b)\bigg] r(g |s).
    \end{aligned}
\end{equation} 
By leveraging $q(\epsilon_m|b)=1-q(\epsilon_h|b)$, we rewrite the problem \eqref{eq:bad_developer_prob_1} as follows:
\begin{equation}
    \begin{aligned}
        \max _{q(\epsilon_l|b)} \ \ &A(\epsilon_h) + \beta \ \sum_s r(g|s)p(s|\epsilon_h) +\bigg\{\big[A(\epsilon_m)-A(\epsilon_h)\big]\\&+\beta \sum_s r(g|s) \big[p(s|\epsilon_m)-p(s|\epsilon_h)\big]\bigg\}q(\epsilon_m|b).
    \end{aligned}
\label{eq:owner_decision}
\end{equation} 
Since the first two terms $A(\epsilon_h) + \beta \ \sum_s r(g|s)p(s|\epsilon_h)$ are independent of $q(\cdot|b)$, \eqref{eq:owner_decision} suggests the following strategy for the irresponsible developer: let $\Delta A=A(\epsilon_m)-A(\epsilon_h)$,
\[
\begin{cases}
    q(\epsilon_m|b)=1, & \Delta A + \beta \sum_s r(g|s) \big[p(s|\epsilon_m)-p(s|\epsilon_h)\big] > 0\\
    q(\epsilon_h|b)=1, & \textup{otherwise.}
\end{cases}
\]
That is, the irresponsible developer has a pure strategy by choosing either $q(\epsilon_m|b)=1$ or $q(\epsilon_h|b)=1$.

\subsection{Proof of auditor's strategy $r$}\label{appendix: r(w|s)}
In (\ref{eq:auditor_prob}), the KL divergence term with a negative sign is concave with respect to the decision variables $r(\cdot)$ given fixed priors $\mu(\cdot)$.
Therefore, the combination of the terms in the objective function forms a weighted sum of concave functions. This makes the overall objective function concave.
Given the linear constraints, the feasibility set is convex. Hence, the optimization problem (\ref{eq:auditor_prob}) is a concave maximization over a convex set.

The Lagrangian corresponding to (\ref{eq:auditor_prob}) is then given by  
\begin{equation}
\begin{aligned}
    J(r, y, z) = &\ u(g, F) \sum_s\bigg[\sum_{\epsilon} p(s |\epsilon) q(\epsilon | g)\bigg] r(b | s) \\
 &+ u(b, T) \sum_s\bigg[\sum_{\epsilon} p(s |\epsilon) q(\epsilon | b)\bigg] r(g|s) \\
 &-\lambda \ \sum_s\bigg[\sum_{\epsilon} p(s |\epsilon) q(\epsilon |g)+\sum_{\epsilon} p(s|\varepsilon) q(\epsilon |b)\bigg] \\
 &\sum_\omega r(\omega| s) \log \frac{r(\omega|s)}{\mu(\omega)}-y r(g|s) -z r(b|s),
\end{aligned}
\end{equation}
where $y \in \mathbb{R}$ and $z \in \mathbb{R}$ are the associated Lagrange multipliers.
Then, the first-order condition concerning $r(g|s)$ implies
\begin{equation*}
\begin{aligned}
    \frac{\partial J(r)}{\partial r(g|s)} 
    &= u(b, T) \bigg[p(s|\epsilon_m)q(\epsilon_m|b)+p(s|\epsilon_h)q(\epsilon_h|b)\bigg]\\ 
    &-\lambda \bigg[p(s|\epsilon_l)+ p(s|\epsilon_m)q(\epsilon_m|b)+p(s|\epsilon_h)q(\epsilon_h|b) \bigg] \\
    &\bigg(\log \frac{r(g|s)}{\mu(g)}+1 \bigg)-y\\
    &=0.
\end{aligned}
\end{equation*}
Hence, we obtain
\begin{equation*}
\begin{aligned}
     u(b,T)Q_b(s)-\lambda v(s)\bigg(\log \frac{r(g|s)}{\mu(g)}+1\bigg)-y=0,\\
     \frac{u(b, T)Q_b(s)}{\lambda v(s)}-\bigg(\frac{y}{\lambda v(s)}+1\bigg) = \log \frac{r(g|s)}{\mu(g)}.
\end{aligned}
\end{equation*} By letting $\log y'(s)=\big(\frac{y}{\lambda v(s)}+1\big)$, $r(g|s)$ can then be written as \eqref{eq:r_g}. We can get $r(b|s)$ described in \eqref{eq:r_b} with a similar process. 

\subsection{Proof of Proposition \ref{prop:same_s}}\label{appendix: same_s}

We sketch the proof for $|\mathcal{S}|=2$ with $\mathcal{S}=\{s_1, s_2\}$. In this example, $p(s_1|\epsilon_m)+p(s_2|\epsilon_m)=1$ and $p(s_1|\epsilon_h)+p(s_2|\epsilon_h)=1$, then $r(g|s_1)[p(s_1|\epsilon_m)-p(s_1|\epsilon_h)] + r(g|s_2)[p(s_2|\epsilon_m)-p(s_2|\epsilon_h)] = 1-1=0$ if $r(g|s_1)=r(g|s_2)$. 

Hence, $A(\epsilon_m)-A(\epsilon_h) +\sum_s r(g|s) \big[p(s|\epsilon_m)-p(s|\epsilon_h)\big] < 0$ in the case where $A(\epsilon_m)< A(\epsilon_h)$, which leads to $q(\epsilon_m|b)=0$ and $q(\epsilon_h|b)=1$. This completes the proof.

\end{document}